\def\mymedskip{\vskip\medskipamount}
\def\mymedbreak{\par \ifdim\lastskip<\medskipamount
  \removelastskip \penalty-100 \mymedskip \fi}
\def\myaftermedspace{\par \ifdim\lastskip<\medskipamount
  \removelastskip \penalty55\mymedskip\fi}
\newcommand{\eop}{{\unskip\nobreak\hfil\penalty50
          \hskip2em\hbox{}\nobreak\hfil$\Box$
          \parfillskip=0pt \finalhyphendemerits=0 \par}}
\newenvironment{proof}%
{\mymedbreak{\noindent\bf Proof:\enspace}}{\eop\myaftermedspace}
{\mymedbreak{\noindent\bf Proof of Theorem #1:\enspace}}{\eop\myaftermedspace}
\mymedbreak\noindent{\bf Remark:}%
\newtheorem{teor}{Theorem}[section]
\newtheorem{defi}[teor]{Definition}
\newtheorem{exam}[teor]{Example}
\newtheorem{lem}[teor]{Lemma}
\newtheorem{cor}[teor]{Corollary}
\newcommand{\beq}{\begin{equation}}
\newcommand{\eeq}{\end{equation}}
\newcommand{\beql}[1]{\begin{equation} \label{#1}}
\newcommand{\eeql}{\end{equation}}
\newcommand{\beqa}{\begin{eqnarray*}}
\newcommand{\eeqa}{\end{eqnarray*}}
\newcommand{\beqal}[1]{\begin{eqnarray} \label{#1}}
\newcommand{\eeqal}{\end{eqnarray}}
\newcommand{\beqan}{\begin{eqnarray}}
\newcommand{\eeqan}{\end{eqnarray}}
\newcommand{\bpf}{\begin{proof}}
\newcommand{\epf}{\end{proof}}
\newcommand{\GF}{{\rm GF}}
\begin{document}
\begin{titlepage}
\title{Optimal codes for correcting a single (wrap-around) burst of
erasures}
\date{\today}
\author{%
Henk D.L.\ Hollmann and Ludo M.G.M.\ Tolhuizen
\thanks{The authors are with Philips Research Laboratories, Prof.\ Holstlaan 4,
5656 AA Eindhoven, The Netherlands;
e-mail:\{henk.d.l.hollmann,ludo.tolhuizen\}@philips.com}%
}
\maketitle
\begin{abstract}
In 2007, Martinian and Trott presented codes for correcting a
burst of erasures with a minimum decoding delay. Their
construction employs $[n,k]$ codes that can correct any burst of
erasures (including wrap-around bursts) of length $n-k$. They
raised the question if such $[n,k]$ codes exist for all integers
$k$ and $n$ with $1\leq k\leq n$ and all fields (in particular,
for the binary field). In this note, we answer this question
affirmatively by giving two recursive constructions and a direct
one.
\end{abstract}
\end{titlepage}
\section{Introduction}
In \cite{MaTr}, Martinian and Trott present codes for correcting a
burst of erasures with a minimum decoding delay. Their
construction  employs $[n,k]$ codes that can correct any burst of
erasures (including wrap-around bursts) of length $n-k$. Examples
of such codes are MDS codes and cyclic codes.
The question is raised in \cite{MaTr} if such $[n,k]$ codes exist
for all integers $k$ and $n$ with $1\leq k\leq n$ and all fields
(in particular, over the binary field). In this note, we answer
this question affirmatively by giving two recursive constructions
and a direct one.

 Throughout this note, all matrices and codes
are over the (fixed but arbitrary) finite field $\mathbb{F}$, and
we restrict ourselves to linear codes. \\ Obviously, a code of
length $n$ can correct a pattern $E$ of erasures if and only if
any codeword can be uniquely recovered from its values in the
$(n-|E|)$ positions outside $E$. As a consequence, if an $[n,k]$
code can correct a pattern $E$ of erasures, then $n-|E|\geq k$,
{\em i.e.},  $|E|\leq n-k$.  We call an $[n,k]$ code {\em optimal}
if it can correct any burst of erasures (including wrap-around
bursts) of length $n-k$.\footnote{A more precise terminology would
be "optimal for the correction of a single (wrap-around) burst of
erasures", but we opted for just "optimal" for notational
convenience.}
 Equivalently, an $[n,k]$ code is
optimal if knowledge of any $k$ (cyclically) consecutive symbols
from a codeword allows one to uniquely recover that codeword, or,
in coding parlance, if each of the $n$ sets of $k$ (cyclically)
consecutive codeword positions forms an information set.
 We call a $k\times n$ matrix {\em good} if any $k$ cyclically
consecutive columns of $G$ are independent. It is easy to see that
a code is optimal if  and only if it has a good generator matrix.
\\ Throughout this note, we
denote with $I_k$ the $k\times k$ identity matrix, and with $X^T$
the transpose of the matrix $X$. \\
\section{A recursive construction of optimal codes}
In this section, we give a recursive construction of good
matrices, and hence of optimal codes. We start with a simple
duality result.
\begin{lem}
 Let $C$ be an $[n,k]$ code, and let $C^{\perp}$ be its
 dual. If $I\subset\{1,\ldots ,n\}$ has size $k$ and is an
 information set for $C$, then $I^{\ast}=\{1,\ldots ,n\}\setminus I$ is an
 information set for $C^{\perp}$.
 \end{lem}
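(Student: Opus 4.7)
The plan is to reduce to the standard systematic-encoder duality. First, I would observe that the statement is invariant under permutations of the coordinates: permuting coordinates takes $C$ to an isomorphic code whose dual is the same permutation of $C^{\perp}$, and it takes information sets to information sets. Hence I may assume $I = \{1,\ldots,k\}$, so $I^{\ast} = \{k+1,\ldots,n\}$.

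Next, since $I$ is an information set for $C$, the restriction map $C \to \mathbb{F}^I$ is bijective, so for every $x \in \mathbb{F}^k$ there is a unique codeword in $C$ whose first $k$ coordinates equal $x$. Picking the preimages of the standard basis vectors as rows yields a generator matrix in systematic form, namely $G = [\,I_k \mid A\,]$ for some $k \times (n-k)$ matrix $A$ over $\mathbb{F}$.

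Now define $H = [\,-A^T \mid I_{n-k}\,]$. A direct computation gives $G H^T = -A + A = 0$, so every row of $H$ lies in $C^{\perp}$; since $H$ has $n-k = \dim C^{\perp}$ rows and contains an $(n-k) \times (n-k)$ identity block, it has full row rank and is therefore a generator matrix for $C^{\perp}$. The submatrix of $H$ formed by the columns indexed by $I^{\ast} = \{k+1,\ldots,n\}$ is exactly $I_{n-k}$, which is invertible. Thus the restriction map $C^{\perp} \to \mathbb{F}^{I^{\ast}}$ is a bijection, i.e., $I^{\ast}$ is an information set for $C^{\perp}$.

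There is really no main obstacle here; the only care needed is to justify the reduction to $I = \{1,\ldots,k\}$ and to verify that $H$ indeed generates the full dual (not just a subspace), which is handled by the rank count above.
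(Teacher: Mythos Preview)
Your proof is correct. The reduction by coordinate permutation is legitimate (permutation commutes with taking duals), the systematic generator $G=[I_k\mid A]$ exists precisely because $I$ is an information set, and the standard computation $GH^T=-A+A=0$ together with the rank count shows $H=[-A^T\mid I_{n-k}]$ generates all of $C^\perp$; the identity block on $I^\ast$ then finishes it.

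The paper, however, takes a different and shorter route: a two-line contradiction argument that avoids choosing any generator matrix. It supposes $I^\ast$ is \emph{not} an information set for $C^\perp$, so some nonzero $\mathbf{x}\in C^\perp$ vanishes on $I^\ast$; then the orthogonality relation $\sum_i x_ic_i=0$ collapses to a nontrivial linear relation $\sum_{i\in I}x_ic_i=0$ satisfied by every $\mathbf{c}\in C$, forcing the projection of $C$ onto $I$ into a proper hyperplane of $\mathbb{F}^k$ and contradicting that $I$ is an information set. Your approach is more constructive and, as a byproduct, exhibits the systematic generator of $C^\perp$ on $I^\ast$ explicitly; the paper's approach is coordinate-free and gets to the point faster without the permutation reduction or the explicit $A$.
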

 \begin{proof}
 By contradiction. Suppose that $I^{\ast}$
  is not an information set for $C^{\perp}$. Then there is a non-zero word {\bf x} in $C^{\perp}$
that is zero in the positions indexed by $I^{\ast}$. As ${\bf x}$
is in $C^{\perp}$, for any word {\bf c}$\in$$C$ we have that
\[ 0 = \sum_{i=1}^n x_ic_i = \sum_{i\in I} x_i c_i .  \]
As a consequence, there are  $k$-tuples that do not occur in $I$
in any word of $C$, a contradiction. We conclude that $I^{\ast}$
is an information set for $C^{\perp}$. \end{proof} As a
consequence, we have the following.
\begin{cor}\label{dual}
A linear code is optimal if and only if its dual is optimal.
\end{cor}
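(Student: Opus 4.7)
The plan is to derive the corollary directly from the preceding lemma by translating the notion of ``optimal'' into the language of information sets and then exploiting the bijection between cyclically consecutive windows and their complements.

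First I would fix notation: let $C$ be an $[n,k]$ code, so that $C^\perp$ is an $[n,n-k]$ code. By the definition given just before the lemma, $C$ is optimal precisely when each of the $n$ cyclic windows of length $k$ in $\{1,\ldots,n\}$ is an information set for $C$, and $C^\perp$ is optimal precisely when each of the $n$ cyclic windows of length $n-k$ is an information set for $C^\perp$.

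The key observation is that complementation within $\{1,\ldots,n\}$ induces a bijection between the $n$ cyclically consecutive sets of size $k$ and the $n$ cyclically consecutive sets of size $n-k$: the complement of the window $\{i,i+1,\ldots,i+k-1\}$ (indices mod $n$) is the window $\{i+k,i+k+1,\ldots,i+n-1\}$. Combined with the lemma, this gives one direction: if every cyclic $k$-window is an information set for $C$, then every cyclic $(n-k)$-window, being the complement of such a set, is an information set for $C^\perp$, so $C^\perp$ is optimal.

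For the converse I would simply apply the direction just established to $C^\perp$ in place of $C$, and use $(C^\perp)^\perp=C$. No step here poses a real obstacle; the only thing to be careful about is checking that complementation does send the collection of cyclic $k$-windows onto the collection of cyclic $(n-k)$-windows, which is immediate from the explicit description above.
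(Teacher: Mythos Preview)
Your argument is correct and is exactly the approach intended in the paper: the corollary is stated there simply as an immediate consequence of the preceding lemma, and your proof just makes explicit the two easy points the paper leaves implicit, namely that the complement of a cyclic $k$-window is a cyclic $(n-k)$-window, and that the converse follows by applying the lemma to $C^\perp$ together with $(C^\perp)^\perp=C$.
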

Our first theorem shows how to construct a good $k\times (k+n)$
matrix from a good $k\times n$ matrix.
\begin{teor}\label{thm1}
Let $G=(I_k\; P)$ be a good $k\times n$ matrix. Then
$G^{\prime}=(I_k\;I_k\;P)$ is a good $k\times (k+n)$ matrix.
\end{teor}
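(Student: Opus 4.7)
The plan is to verify directly that each of the $k+n$ cyclic $k$-windows of columns of $G'$ is linearly independent, splitting into two groups according to the starting position $i \in \{1, \ldots, k+n\}$.

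For $i \in \{1, 2, \ldots, k+1\}$, the window of length $k$ lies entirely inside the duplicated block $(I_k\; I_k)$ (columns $1$ through $2k$ of $G'$). Because the two copies of $I_k$ expose the same standard-basis columns, the $k$ consecutive columns collect the vectors $e_i, e_{i+1}, \ldots, e_k$ from the first $I_k$ followed by $e_1, \ldots, e_{i-1}$ from the second $I_k$, yielding the full set $\{e_1, \ldots, e_k\}$, which is trivially independent.

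For $i \in \{k+2, \ldots, k+n\}$, I would show that the $G'$-window coincides, as a set of columns, with the cyclic $k$-window of $G$ starting at position $i - k \in \{2, \ldots, n\}$. This identification is built into the construction: column $k+j$ of $G'$ equals column $j$ of $G$ for every $j = 1, \ldots, n$, so the portion of the window lying in $G'$-positions $k+1, \ldots, k+n$ matches the same consecutive columns of $G$. If the $G'$-window wraps past position $k+n$ into positions $1, \ldots, i-n-1$ of $G'$, those columns are $e_1, \ldots, e_{i-n-1}$, which are also the first $i-n-1$ columns of $G$ since $G = (I_k\; P)$. Thus the wrap in $G'$ perfectly mimics the wrap in $G$, and independence of the window follows from the goodness of $G$.

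The step requiring the most care is the wrap-around subcase. The crucial observation, and the reason the construction works, is that the first $k$ columns of $G$ agree with the first $k$ columns of $G'$ (both equal $I_k$), so whenever the $G'$-window spills out of the $P$-block back into the initial $I_k$ of $G'$, it corresponds exactly to a $G$-window spilling out of $P$ back into the initial $I_k$ of $G$. Once this correspondence is in place, the proof is nothing more than the three cases above.
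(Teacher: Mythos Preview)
Your proof is correct and follows exactly the same approach as the paper: any cyclic $k$-window of $G'$ either consists of $k$ distinct unit vectors (your Case~1) or coincides with a cyclic $k$-window of $G$ (your Case~2). The paper states this in one sentence; you have simply unpacked the verification in detail, including the wrap-around bookkeeping.
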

\begin{proof}
Any $k$ cyclically consecutive columns in $G^{\prime}$ either are
$k$ different unit vectors, or $k$ cyclically consecutive columns
of $G$.
\end{proof}
Our next theorem shows how to construct a good $n\times (2n-k)$
matrix from a good $k\times n$ matrix.
\begin{teor}\label{thm2}
 Let  $G=\left(
I_k \;P\right)$ be a good $k\times n$ matrix. The the following
$n\times (2n-k)$ matrix $G^{\prime}$ is good
\[ G^{\prime} = \left( \matrix{ I_{n-k} & 0 & I_{n-k} \cr
                        0     & I_k & P} \right) . \]
\end{teor}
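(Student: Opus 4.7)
The plan is to fix an arbitrary window $W$ of $n$ cyclically consecutive columns of $G'$ and show its columns are linearly independent by reducing the window matrix, via column operations, to a block lower-triangular form whose bottom-right $k\times k$ block is made up of $k$ cyclically consecutive columns of $G$; goodness of $G$ will then close the argument. Split $W$ by the three column blocks of $G'$ as $W_1=W\cap\{1,\ldots,n-k\}$, $W_2=W\cap\{n-k+1,\ldots,n\}$, and $W_3=W\cap\{n+1,\ldots,2n-k\}$, and let $H=\{1,\ldots,2n-k\}\setminus W$ be the gap, a cyclic interval of length $n-k$.

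First I want to show that every unit vector $e_i$, $i=1,\ldots,n-k$, of the top $n-k$ rows is represented in the window. In $G'$, such an $e_i$ appears in the top only at columns $i$ and $n+i$; these two columns are at cyclic distance exactly $n-k$ in the cycle of length $2n-k$. Since $H$ is a cyclic interval of length $n-k$, any two of its elements are at cyclic distance at most $n-k-1$, so $H$ cannot contain both $i$ and $n+i$, and at least one of them must lie in $W$.

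Next, for every $i$ with both $i\in W_1$ and $n+i\in W_3$, I replace the column indexed by $n+i$ by (column $n+i$) $-$ (column $i$), producing the column $(0,\,p_i)^T$, where $p_i$ is the $i$-th column of $P$. After these operations $W$ has $n-k$ columns with distinct top unit vectors and $k$ columns with zero top, and reordering puts the matrix into block lower-triangular form with an invertible top-left block. So the window matrix is nonsingular iff the $k\times k$ bottom-right block $B$ is, and $B$'s columns are the columns of $G=(I_k\; P)$ at the positions
\[ S=\{j-(n-k):j\in W_2\}\cup\{k+i:i\in W_1\text{ and }n+i\in W_3\}\subseteq\{1,\ldots,n\}. \]
It remains to prove that $S$ is a set of $k$ cyclically consecutive positions in $\{1,\ldots,n\}$.

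This last step is the main obstacle. To handle it, I plan to introduce the map $\psi\colon\{1,\ldots,2n-k\}\to\{1,\ldots,n\}$ given by $\psi(j)=j+k$ for $j\leq n-k$ and $\psi(j)=j-(n-k)$ for $j\geq n-k+1$, and to check directly that $\{1,\ldots,n\}\setminus S=\psi(H)$. A short computation shows that $\psi$ advances by $1$ modulo $n$ at every step of the cyclic order on $\{1,\ldots,2n-k\}$ \emph{except} across the wrap $2n-k\to 1$, where it sends $n\mapsto k+1$. Hence if $H$ does not contain both $2n-k$ and $1$, then $\psi(H)$ is immediately a cyclic interval of length $n-k$; otherwise, writing $H=\{2n-k-a+1,\ldots,2n-k\}\cup\{1,\ldots,b\}$ with $a+b=n-k$ and $a,b\geq 1$, the identity $n-a+1=k+b+1$ yields $\psi(H)=\{k+1,\ldots,n\}$, again a cyclic interval. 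In either case the complement $S$ is a cyclic interval of length $k$ in $\{1,\ldots,n\}$, so $B$ is invertible by goodness of $G$, and the proof is complete.
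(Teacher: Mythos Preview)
Your proof is correct. The key claims---that each top unit vector $e_i$ is represented in any window because $H$ has length $n-k$ while columns $i$ and $n+i$ are at cyclic distance exactly $n-k$, that the column operations leave exactly $k$ columns with zero top, and that $\{1,\ldots,n\}\setminus S=\psi(H)$ is a cyclic interval of length $n-k$---all check out, including the wrap-around case, where $n-a+1=k+b+1$ glues the two pieces of $\psi(H)$ into $\{k+1,\ldots,n\}$.

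However, your route is quite different from the paper's. The paper does not analyse windows of $G'$ at all; instead it argues via duality. Since $G=(I_k\;P)$ is good, Corollary~\ref{dual} gives that $(-P^T\;I_{n-k})$ is good, hence so is the cyclic shift $(I_{n-k}\;-P^T)$; Theorem~\ref{thm1} then yields that $(I_{n-k}\;I_{n-k}\;-P^T)$ is good, and after another cyclic shift and a harmless rescaling of columns one obtains the good matrix $H'=(-I_{n-k}\;-P^T\;I_{n-k})$. One checks that $H'G'^{T}=0$, so $H'$ is a parity-check matrix for the code generated by $G'$, and a second application of Corollary~\ref{dual} finishes the proof. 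Thus the paper's argument is shorter and structural, leaning on the two tools already in hand (duality and Theorem~\ref{thm1}), whereas your argument is self-contained and constructive: it exhibits, for every window, an explicit reduction to a window of $G$ via the map $\psi$. Your approach makes the mechanism transparent (one literally sees which $k$ consecutive columns of $G$ govern each window of $G'$), at the cost of a longer combinatorial verification.
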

\begin{proof}  As $G$ is good, Corollary~\ref{dual} implies that the generator
matrix $(-P^T \; I_{n-k})$ of the dual of the code generated by
$G$ is good. By cyclically shifting the columns of this matrix
over
$(n-k)$ positions to the right, we obtain the good matrix $(I_{n-k}\;\;-P^T)$.  \\
Theorem~1 implies that $(I_{n-k}\; I_{n-k}\;\; -P^T)$ is good, and
so the matrix $H=\left( I_{n-k}\;\;-P^T \;I_{n-k}\right)$ obtained
by cyclically shifting the columns of the former matrix over $n$
positions, is good. Clearly, after multiplying the columns of a
good matrix with non-zero field elements, we obtain a good matrix;
as a consequence, $H^{\prime}= \left(\matrix{-I_{n-k}\;\; -P^T \;
I_{n-k}}\right)$ is good. As $H^{\prime}$ is a good full-rank
parity check matrix of the code generated by $G^{\prime}$, this
latter matrix is good.
\end{proof}
{\bf Remark} The construction from Theorem~\ref{thm2} also occurs
in the proof of \cite[Thm.1]{MaTr}. \\ \\
%
The construction from Theorem~\ref{thm1} increases the code length
and fixes its dimension; the construction from Theorem~\ref{thm2}
also increases the code length, but fixes its redundancy. These
constructions can be combined to give a recursive construction of
optimal $[n,k]$ code for all $k$ and $n$. The following definition
is instrumental in making this explicit.
\begin{defi}
For positive integers $r$ and $k$, we recursively define the
$k\times r$ matrix $P_{k,r}$ as follows:
\[ P_{k,r} = \left\{ \begin{array}{cl}
\left( \matrix{ I_r \cr P_{k-r,r}}\right) & \mbox{ if } 1\leq r <
k , \\
  I_k & \mbox{ if } r=k, \\
  \left( \matrix{ I_{k}\; P_{k,r-k}}\right) &  \mbox{ if } r > k.
\end{array} \right. \]
\end{defi}
\begin{teor}\label{thm3}
 For each positive integer $k$, the matrix $I_k$ is good. \\
 For all integers $k$ and $n$ with 1$\leq k <n$, the $k\times n$
 matrix
  $\left( I_k \; P_{k,n-k}\right)$ is good.
\end{teor}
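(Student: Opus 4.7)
The plan is to prove both claims by induction on $n$. The first claim is immediate, since the $k$ columns of $I_k$ are linearly independent and the $k$ cyclically consecutive columns of $I_k$ are just all its columns. This observation also covers the base case $n=k$ of the second claim. For the inductive step with $n>k$, I would distinguish the three cases in the recursive definition of $P_{k,n-k}$, each of which matches one of the two previous construction theorems.

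In the case $n-k>k$, the recursion gives $P_{k,n-k}=(I_k\;P_{k,n-2k})$, so the target matrix equals $(I_k\;I_k\;P_{k,n-2k})$. By the induction hypothesis, $(I_k\;P_{k,n-2k})$ is good (it has length $n-k<n$), and Theorem~\ref{thm1} then implies that $(I_k\;I_k\;P_{k,n-2k})$ is good. The boundary case $n-k=k$ is the same argument applied to $G=I_k$ with $P$ a $k\times 0$ matrix: Theorem~\ref{thm1} gives that $(I_k\;I_k)$ is good.

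In the remaining case, write $r=n-k$ with $1\leq r<k$. The recursion gives $P_{k,r}=\left(\matrix{I_r \cr P_{k-r,r}}\right)$, and by splitting $I_k$ as a block-diagonal matrix with diagonal blocks $I_r$ and $I_{k-r}$, the target matrix takes the form
\[ (I_k \; P_{k,n-k}) = \left(\matrix{I_r & 0 & I_r \cr 0 & I_{k-r} & P_{k-r,r}}\right). \]
This is precisely the matrix produced by Theorem~\ref{thm2} from the $(k-r)\times k$ matrix $(I_{k-r}\;P_{k-r,r})$: in the notation of that theorem one has $k\mapsto k-r$ and $n\mapsto k$, so the three vertical strips of widths $r$, $k-r$, $r$ match exactly. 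By the induction hypothesis the inner matrix $(I_{k-r}\;P_{k-r,r})$ is good, since its length $k$ is strictly smaller than $n$, so Theorem~\ref{thm2} closes this case.

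The only real bookkeeping is the identification of the block decomposition of $(I_k\;P_{k,n-k})$ in the case $r<k$ with the conclusion of Theorem~\ref{thm2}, and the verification that each inductive invocation refers to a matrix of strictly shorter length (so the induction is well-founded). Both checks are immediate, and the recursive definition of $P_{k,r}$ is engineered precisely so that the two previous construction theorems patch together to produce $(I_k\;P_{k,n-k})$ at every step.
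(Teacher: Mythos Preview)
Your proof is correct and follows essentially the same route as the paper: a case split according to the three branches in the recursion for $P_{k,n-k}$, invoking Theorem~\ref{thm1} when $n-k\geq k$ and Theorem~\ref{thm2} when $n-k<k$, with the inductive hypothesis supplying the goodness of the shorter input matrix. The only cosmetic differences are that the paper inducts on $k+n$ rather than on $n$ (both orderings make the recursion well-founded), and that in the boundary case $n=2k$ the paper simply notes that $(I_k\;I_k)$ is obviously good rather than invoking Theorem~\ref{thm1} with an empty $P$.
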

\begin{proof} The first statement is obvious. \\
The second statement will be proved by induction on $k+n$. It is
easily verified that it is true for $k+n=3$. Now assume that the
statement is true for all integers $a,b$ with $1\leq a\leq b$ and
$a+b < k+n$.  We consider three cases.
\\
If $2k < n$, then by induction hypothesis $(I_k \; P_{k,n-2k})$ is
good. By Theorem~\ref{thm1}, $\left( I_k \; I_k \; P_{k,n-2k}
\right) = \left( I_k \; P_{k,n-k}\right)$  is also good.   \\ If
$2k=n$, then $\left( I_k \; P_{n-k} \right) = \left( I_k \;
P_{k,k} \right) = \left( I_k\; I_k\right)$, which obviously is a
good matrix.
 If $k<n$ and $2k>n$, the induction hypothesis implies that
 $(I_{2k-n} P_{2k-n,n-k})$ is a good
 $(2k-n)\times k$ matrix. By Theorem~\ref{thm2},
\[ \left( \matrix{
                 I_{n-k} & 0 & I_{n-k} \cr
                 0   &   I_{2k-n} & P_{2k-n,n-k}} \right) =
                 \left( I_k  P_{k,n-k}\right)   \]
                 is also good.
\end{proof}
\begin{exam}
\begin{rm}
Theorem~$\ref{thm3}$ implies that $(I_{28} P_{28,17})$ is a good
$28\times 45$ matrix. \\
According to the definition, $P_{28,17}=\left( \matrix{ I_{17}\cr
P_{11,17}}\right)$. \\
Again according to the definition, $P_{11,17}=(I_{11}P_{11,6})$. \\
Continuing in this fashion, $P_{11,6}= \left(\matrix{I_6 \cr
P_{5,6}}\right)$. \\
Finally, $P_{5,6}=(I_5 P_{5,1})$, and, as can be readily seen by
induction on $k$,  $P_{k,1}$ is the all-one vector of height $k$.
\\ Putting this altogether, we find that  the following $28\times 45$ matrix $G$ is good:
\[ G = \left( \begin{array}{ccc|cc|ccc}
    I_{6} & 0      & 0     & 0     & 0     & I_{6} & 0    & 0 \\
    0     &  I_5   & 0     & 0     & 0     & 0     & I_5  & 0 \\
    0    & 0     & I_6     & 0     & 0     & 0     & 0   & I_6 \\
    \hline
    0    & 0      & 0      & I_6   & 0     & I_6   & 0   & I_6 \\
    0    & 0      & 0      & 0     & I_5   & 0     & I_5 & P_{5,6}
    \end{array}
    \right) , \]
    where $P_{5,6} = \left( I_5 {\bf 1}\right)$, where ${\bf 1}$
    denotes the all-one column vector.
    \end{rm}
\end{exam}
To close this section, we remark that with an induction argument
it can be shown that for all positive integers $k$ and $r$, we
have $P_{k,r}=P_{r,k}^T$.
\section{Adding one column to a good matrix}
In Theorem~\ref{thm1}, we added $k$ columns to a good $k\times n$
matrix to obtain a good $k\times (k+n)$ matrix. In this section,
we will show that it is always possible to add a single column to
a good $k\times n$ matrix in such a way that the resulting
$k\times (n+1)$ matrix is good; we also show that the in the
binary case, there is a {\em unique} column that can be added. The
desired result is a direct consequence of the following
observation, which may be of independent interest.
\begin{lem}\label{dual2}
Let $\mathbb{F}$ be any field, and let $a_1, a_2, \ldots,
a_{2k-2}$ be a sequence of vectors in $\mathbb{F}^k$ such that
$a_i, a_{i+1}, \ldots, a_{i+k-1}$ are independent over
$\mathbb{F}$ for $i=1, \ldots, k-1$. For $i=1, \ldots, k$, let
$b_i$ be a nonzero vector orthogonal to $a_i, a_{i+1}, \ldots,
a_{i+k-2}$. Then $b_1, \ldots, b_k$ are independent over
$\mathbb{F}$.
\end{lem}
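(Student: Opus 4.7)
The plan is to recast the desired linear independence as a statement about an intersection of subspaces, and then to evaluate that intersection iteratively, using each of the $k-1$ independence hypotheses exactly once.

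For $i=1,\ldots,k$ set $V_i=\mathrm{span}(a_i,a_{i+1},\ldots,a_{i+k-2})$. Each $V_i$ is $(k-1)$-dimensional because $a_i,\ldots,a_{i+k-2}$ sits inside one of the independent $k$-windows, so its one-dimensional orthogonal complement is spanned by $b_i$, and biduality in $\mathbb{F}^k$ gives $\{b_i\}^{\perp}=V_i$. Hence $u\in\mathbb{F}^k$ is orthogonal to every $b_i$ if and only if $u\in V_1\cap\cdots\cap V_k$. Since there are exactly $k$ vectors $b_i$ in $\mathbb{F}^k$, their linear independence is equivalent to their spanning $\mathbb{F}^k$, equivalently to $V_1\cap\cdots\cap V_k=\{0\}$, which is what I will prove.

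The main step is the claim
\[ V_1\cap V_2\cap\cdots\cap V_j=\mathrm{span}(a_j,a_{j+1},\ldots,a_{k-1}) \]
for $1\le j\le k-1$, proved by induction on $j$. The inclusion $\supseteq$ and the base case $j=1$ are immediate. For the inductive step, a vector $u\in V_1\cap\cdots\cap V_{j+1}$ lies in both $\mathrm{span}(a_j,\ldots,a_{k-1})$ (by the induction hypothesis) and $V_{j+1}=\mathrm{span}(a_{j+1},\ldots,a_{j+k-1})$; writing $u$ in both ways and subtracting yields a linear relation among $a_j,a_{j+1},\ldots,a_{j+k-1}$, which is the $k$-window of index $j$, independent by hypothesis (valid since $j\le k-2$). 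The coefficient of $a_j$ is then forced to vanish, placing $u$ in $\mathrm{span}(a_{j+1},\ldots,a_{k-1})$.

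Taking $j=k-1$ gives $V_1\cap\cdots\cap V_{k-1}=\mathrm{span}(a_{k-1})$. Intersecting with $V_k=\mathrm{span}(a_k,\ldots,a_{2k-2})$ and using independence of the $k$-window indexed by $k-1$, namely $a_{k-1},a_k,\ldots,a_{2k-2}$, forces any element of the intersection to be zero. The only subtle point is bookkeeping: verifying that the inductive step uses windows $1,2,\ldots,k-2$ and the final step uses window $k-1$, together exhausting the $k-1$ independence hypotheses of the lemma and requiring nothing more.
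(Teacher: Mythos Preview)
Your proof is correct and follows essentially the same route as the paper: define $V_i=\mathrm{span}(a_i,\ldots,a_{i+k-2})$, identify $V_i$ with $b_i^{\perp}$, compute the nested intersections $V_1\cap\cdots\cap V_j$ by induction, and conclude that the full intersection is $\{0\}$. The only cosmetic difference is that the paper states the intersection formula for an arbitrary starting index (their $V_{[i+1,i+s]}$) while you specialise to intersections beginning at $V_1$, which is all that is needed; your bookkeeping of which independence hypothesis is used at each step is in fact more explicit than the paper's.
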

\bpf For $i=1, \ldots ,k$, we define
\[V_i:={\rm span}\{a_i, \ldots, a_{i+k-2}\}. \]
For an interval $[i+1, i+s]:=\{i+1, i+2, \ldots, i+s\}$, with
$0\leq i <i+s\leq k$, we let
\[ V_{[i+1,i+s]}=V_{i+1}\cap \cdots \cap V_{i+s}\]
denote the intersection of $V_{i+1}, \ldots, V_{i+s}$.
Note that by definition
\[V_{[i,i]}=V_i=b_i^\perp.\]
We claim that
\[V_{[i+1,i+s]}={\rm span}\{a_{i+s}, \ldots, a_{i+k-1}\}.\]
This is easily proven by induction on $s$: obviously, the claim is true for
$s=1$; if it holds for all $s'\leq s$, then
\beqa V_{[i+1,i+s+1]}&=&V_{[i+1,i+s]}\cap V_{i+s+1}\\
        &=&{\rm span} \{a_{i+s}, \ldots, a_{i+k-1}\} \cap {\rm
span}\{a_{i+s+1},
\ldots, a_{i+s +k-1}\},
\eeqa
hence $V_{[i+1,i+s]}$ certainly contains $a_{i+s+1}, \ldots, a_{i+k-1}$ and
does not contain $a_{i+s}$, since by assumption $a_{i+s}\notin {\rm
span}\{a_{i+s+1},
\ldots, a_{i+s +k-1}\}$.

So by our claim it follows that
\[ \{0\}= V_{[1,k]}=V_1\cap \cdots V_k = b_1^\perp \cap \cdots b_k^\perp, \]
hence $b_1, \ldots, b_k$ are independent.
\epf

As an immediate consequence, we have the following.
\begin{teor}\label{addcolumn-alt}
Let $M$ be a good $k\times n$ matrix over $\GF(q)$. There are
precisely $(q-1)^k$ vectors $x\in\GF(q)^k$ such that the matrix
$(M x)$ is good.
\end{teor}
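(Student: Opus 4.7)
The plan is to translate the goodness of $(M\;x)$ into a system of $k$ linear non-vanishing conditions on $x$, and then to use Lemma~\ref{dual2} to show that the $k$ defining linear forms are linearly independent; counting then becomes immediate.

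First I would identify which cyclically consecutive $k$-subsets of columns of $(M\;x)$ actually constrain $x$. Writing $a_1,\ldots,a_n$ for the columns of $M$, the $n-k+1$ such subsets that avoid $x$ are $\{a_i,\ldots,a_{i+k-1}\}$ for $i=1,\ldots,n-k+1$, and these are independent because $M$ is good. The remaining $k$ subsets contain $x$ together with $k-1$ cyclic neighbours, and a straightforward bookkeeping shows that $(M\;x)$ is good if and only if $x$ lies outside each of the $k$ hyperplanes
\[ V_i \;=\; {\rm span}\{c_i,c_{i+1},\ldots,c_{i+k-2}\},\qquad i=1,\ldots,k, \]
where $c_1=a_{n-k+2},\,c_2=a_{n-k+3},\,\ldots,\,c_{k-1}=a_n,\,c_k=a_1,\,c_{k+1}=a_2,\,\ldots,\,c_{2k-2}=a_{k-1}$ is the list of $2k-2$ cyclically consecutive columns of $M$ surrounding the insertion point of $x$.

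Next I would apply Lemma~\ref{dual2} to the sequence $c_1,\ldots,c_{2k-2}$. Any $k$ consecutive entries of this list form $k$ cyclically consecutive columns of $M$ and so are independent by goodness of $M$, so the hypothesis of Lemma~\ref{dual2} is satisfied. Choosing a nonzero $b_i\in V_i^\perp$ for each $i=1,\ldots,k$ (this space is one-dimensional since $V_i$ is a hyperplane in $\GF(q)^k$), Lemma~\ref{dual2} guarantees that $b_1,\ldots,b_k$ are linearly independent.

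Finally, the $k$ avoidance conditions read $\langle b_i,x\rangle\neq 0$ for $i=1,\ldots,k$. Stacking the $b_i$ as rows of a $k\times k$ matrix $B$, independence of the $b_i$ makes $B$ invertible, so $x\mapsto Bx$ is a bijection of $\GF(q)^k$; the number of admissible $x$ equals the number of vectors in $\GF(q)^k$ with all coordinates nonzero, which is $(q-1)^k$. The main subtlety in the argument is the cyclic indexing in the first step: one must check that the window $c_1,\ldots,c_{2k-2}$ is set up so that the hyperplanes produced by Lemma~\ref{dual2} are exactly the $V_i$ arising from the goodness conditions on $(M\;x)$.
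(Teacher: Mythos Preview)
Your proof is correct and follows essentially the same route as the paper: identify the $k$ cyclic windows through $x$, take normals $b_i$ to the resulting $(k-1)$-dimensional spans, apply Lemma~\ref{dual2} to the $2k-2$ cyclically consecutive columns of $M$ around the insertion point to obtain independence of the $b_i$, and then count. The only differences from the paper's argument are cosmetic (indexing conventions, and phrasing the conditions via hyperplanes $V_i$ rather than directly via the inner products $(x,b_i)$).
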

\bpf Let $M=(m_0,m_1,\ldots ,m_{n-1})$ have columns $m_0,\ldots
m_{n-1}\in GF(q)^k$. We want to find all vectors $x\in\GF(q)^k$
with the property that the $k$ vectors \beql{vecs} m_{n-i},
\ldots, m_{n-1}, x, m_0,\ldots, m_{k-i-2}\eeql are independent,
for all $i=k-1, k-2, \ldots, 0$. So, for $i=k-1, k-2, \ldots, 0$,
let $b_i$ be a nonzero vector orthogonal to $m_{n-i}, \ldots,
m_{n-1},m_0,\ldots, m_{k-i-2}$; since $M$ is good, the $k-1$
vectors $m_{n-i}, \ldots, m_{n-1},m_0,\ldots, m_{k-i-2}$ are
independent, and hence the vectors in (\ref{vecs}) are independent
if and only if $(x,b_i)=\lambda_i\neq0$. Again since $M$ is good,
the $2k-2$ vectors
\[m_{n-k+1}, \ldots, m_{n-1}, m_0, \ldots, m_{k-2}\]
satisfy the conditions in Lemma~\ref{dual2}, hence the vectors
$b_0, \ldots, b_{k-1}$ are independent. So for each choice of
$\lambda=(\lambda_0, \ldots, \lambda_{k-1})$ with $\lambda_i\neq0$
for each $i$, there is a unique vector $x$ for which
$(x,b_i)=\lambda_i$, and these vectors $x$ are precisely the ones
for which $(M x)$ is good. \epf

\section{Explicit construction of good matrices}
By starting with the $k\times k$ identity matrix, and repeatedly
applying Theorem~\ref{addcolumn-alt}, we find that for each field
$\mathbb{F}$ and all positive integers $k$ and $n$ with $n\geq k$,
there exists a $k\times n$ matrix
$G$ such that \\
(1) the $k$ leftmost columns of $G$ form the $k\times k$
identity matrix, and \\
(2) for each $j$, $k\leq j\leq n$, the $j$ leftmost columns of
$G$ form a good $k\times j$ matrix. \\
Note that Theorem~\ref{addcolumn-alt} implies that for the binary
field, these matrices are unique. It turned out that they have a
simple recursive structure, which inspired our general
construction.

In this section, we give, for all positive integers $k$ and $n$
with $k\leq n$, an explicit construction of $k\times n$ matrices
over $\mathbb{Z}_p$, the field of integers modulo $p$, that
satisfy the above properties (1) and (2). Note that such matrices
also satisfy (1) and (2) for extension fields of $\mathbb{Z}_p$.

We start with describing the result for $p=2$. Let $M_1$ be the
matrix
\begin{equation} \label{M1}
 M_1 = \left( \matrix{ 1 & 0 \cr 1 & 1} \right) ,
 \end{equation} and for
$m\geq 1$, let $M_{m+1}$ be the given as
\begin{equation} \label{binrecur}
 M_{m+1} = \left( \matrix{M_m & 0 \cr M_m & M_m} \right) .
 \end{equation}
Clearly, $M_m$ is a binary 2$^m\times 2^m$ matrix. The relevance
of the matrix $M_m$ to our problem is explained in the following
theorem.
\begin{teor}\label{binresult}
Let $k$ and $r$ be two positive integers, and let $m$ be the
smallest integer such that $2^m\geq k$ and $2^m\geq r$. Let $Q$ be
the $k\times r$ matrix residing in the lower left corner of
$M_{m}$. Then for each integer $j$ for which $k\leq j\leq k+r$,
the $j$ leftmost columns of the matrix $(I_k \; Q)$ form a good
binary $k\times j$ matrix.
\end{teor}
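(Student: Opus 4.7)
The plan is to prove the stronger statement that $(I_k\mid Q_{k,\ell})$ is good for all $k\geq 1$ and all $\ell\geq 0$, where $Q_{k,\ell}$ denotes the lower-left $k\times\ell$ corner of $M_m$ for any $m$ with $2^m\geq\max(k,\ell)$. A short induction on $m$, using $M_{m+1}=\left(\matrix{M_m & 0 \cr M_m & M_m}\right)$, shows that the lower-left $2^m\times 2^m$ block of $M_{m+1}$ is exactly $M_m$, so $Q_{k,\ell}$ is independent of the choice of $m$ and the notation is well-defined. The theorem then follows by applying the stronger claim for each $j=k,\ldots,k+r$.

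I would induct on $\ell$. The base $\ell=0$ gives $(I_k)$, which is trivially good. For the inductive step, assuming $(I_k\mid Q_{k,\ell-1})$ is good, Theorem~\ref{addcolumn-alt} specialized to $\mathbb{F}_2$ (where $(q-1)^k=1$) produces a \emph{unique} column $x\in\mathbb{F}_2^k$ such that $(I_k\mid Q_{k,\ell-1}\mid x)$ is again good. So the inductive step reduces to showing that $x$ equals the $\ell$-th column of $Q_{k,\ell}$, that is, the $\ell$-th column of the bottom $k$ rows of $M_m$. Following the proof of Theorem~\ref{addcolumn-alt}, $x$ is uniquely characterized by the conditions $(x,b_i)=1$ for $i=0,\ldots,k-1$, where each $b_i$ is a nonzero vector orthogonal to a prescribed cyclic block of $k-1$ consecutive columns of $(I_k\mid Q_{k,\ell-1})$.

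The crux of the proof is therefore the explicit matching of this unique $x$ with the prescribed column of $M_m$. I expect this to require a secondary induction on $m$, leveraging the self-similar block structure of $M_m$ to give a recursive description of the companion vectors $b_i$ and hence a closed form for $x$; the case split is governed by the binary expansion of $\ell$ and by whether $k$ exceeds various powers of two. A potentially cleaner alternative would bypass Theorem~\ref{addcolumn-alt} altogether and verify goodness of $(I_k\mid Q_{k,\ell})$ directly by induction on $m$: for each cyclic window of $k$ consecutive columns, show linear independence by splitting on how the window crosses the block boundaries of $M_m$, reducing to an independence statement inside a smaller $M_{m-1}$ together with standard identity-matrix blocks. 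In either approach, the main obstacle I anticipate is the bookkeeping for the many window positions, and in particular the wrap-around case where the window straddles the boundary between $I_k$ and $Q_{k,\ell}$.
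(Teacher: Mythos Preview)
Your proposal is a plan rather than a proof: you correctly set up the induction on $\ell$ and observe that Theorem~\ref{addcolumn-alt} over $\mathbb{F}_2$ yields a unique extension column, but you explicitly leave the crux---identifying that unique $x$ with the next column of $M_m$---unresolved. Note moreover that the uniqueness is not a genuine reduction: verifying $(x,b_i)=1$ for each $i$ is literally the statement that each of the $k$ cyclic windows through the new column consists of independent vectors, which is exactly what has to be proved. So invoking Theorem~\ref{addcolumn-alt} buys nothing beyond organizing those $k$ conditions; you still owe a structural argument about $M_m$, and both of your suggested routes (a recursive description of the $b_i$, or a direct induction on $m$ with window-by-window case analysis) remain to be carried out.

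The paper takes a completely different route. It observes (Corollary~\ref{recur} with $p=2$) that $M_m$ is the Pascal matrix with entries ${i-1\choose j-1}$ for $1\le i,j\le 2^m$, reduced modulo~$2$; hence $Q$ coincides with the binomial matrix $Q_{k,r}$ of Definition~\ref{expldef}. Goodness of $(I_k\;Q_{k,r})$ is then established directly in Theorem~\ref{exppf}: each cyclic window of $k$ columns reduces to the invertibility of a specific square block of $Q_{k,r}$, and those blocks are shown invertible by two short inductions on the block size (Lemmas~\ref{intinv} and~\ref{modpinv2}) using elementary row/column operations together with Lucas' theorem. No induction on $m$ or on $\ell$, and no recursive tracking of the $b_i$, is required; the closed binomial description sidesteps precisely the bookkeeping you anticipated as the main obstacle.
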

Theorem~\ref{binresult} is a consequence from our results for the
general case in the remainder of this section. \\ \\
We now define the matrices that are relevant for constructing good
matrices over $\mathbb{Z}_p$.
\begin{defi}\label{expldef} Let $p$ be a prime number, and let $k,r$ be positive integers.
 Let $m$ be the smallest integer such that $p^m\geq r$ and $p^m\geq k$.
  The $k\times r$ matrix $Q_{k,r}$ is defined as
\[ Q_{k,r}(i,j) = {p^m-k+i-1 \choose j-1}  \;\;\mbox{ for }
     1\leq i\leq k, 1\leq j\leq r . \]
\end{defi}

In Theorem~\ref{exppf} we will show that the matrix $(I_k\;
Q_{k,r})$ is good over $\mathbb{Z}_p$. But first, we derive a
recursive property of the $Q$-matrices. To this aim, we need some
well-known results on binomial coefficients modulo $p$.
\begin{lem}\label{binop}
Let $p$ be a prime number, and let $m$ be a positive integer. For
any integer $i$ with $1\leq i\leq p^m-1$, we have that ${p^m
\choose i}\equiv 0 \bmod p$.
\end{lem}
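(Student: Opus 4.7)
The plan is to prove this via the Frobenius (``Freshman's dream'') identity in the polynomial ring $\mathbb{Z}_p[x]$, namely $(1+x)^{p^m}\equiv 1+x^{p^m}\pmod p$, and then compare coefficients. This is cleaner than attacking a $p$-adic valuation argument for $\binom{p^m}{i}$ directly, because the quotient of factorials is awkward to analyze while the polynomial identity packages everything at once.

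First I would dispose of the base case $m=1$. For $1\le i\le p-1$ the numerator of $\binom{p}{i}=p!/(i!(p-i)!)$ contains the factor $p$, while neither $i!$ nor $(p-i)!$ contains a factor of $p$ (since both $i$ and $p-i$ are strictly less than $p$); hence $p\mid\binom{p}{i}$, which gives the identity $(1+x)^p\equiv 1+x^p\pmod p$ after expanding by the binomial theorem.

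Next I would induct on $m$. Assuming $(1+x)^{p^m}\equiv 1+x^{p^m}\pmod p$, raise both sides to the $p$-th power and apply the base case with $x^{p^m}$ in place of $x$:
\[
(1+x)^{p^{m+1}}=\bigl((1+x)^{p^m}\bigr)^p\equiv\bigl(1+x^{p^m}\bigr)^p\equiv 1+x^{p^{m+1}}\pmod p.
\]
Finally, comparing the coefficient of $x^i$ on both sides of $(1+x)^{p^m}\equiv 1+x^{p^m}\pmod p$ for any $1\le i\le p^m-1$, the left side contributes $\binom{p^m}{i}$ whereas the right side contributes $0$, which is precisely the claim. There is no real obstacle here: once the base case is in hand, the induction is immediate because the Frobenius map is a ring homomorphism modulo $p$, so raising to the $p$-th power distributes over addition.
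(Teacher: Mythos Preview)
Your proof is correct, but it takes a different route from the paper's. The paper gives a one-line $p$-adic valuation argument: writing
\[
\binom{p^m}{i}=\frac{p^m\binom{p^m-1}{i-1}}{i},
\]
the numerator carries at least $m$ factors of $p$ while the denominator $i$, being at most $p^m-1$, carries at most $m-1$; hence $p$ divides the quotient. So the ``awkward'' direct approach you chose to avoid is in fact the paper's proof, and it is arguably the shortest one available. Your Frobenius-by-induction argument is perfectly valid and has the virtue that the identity $(1+x)^{p^m}\equiv 1+x^{p^m}\pmod p$ is exactly what the paper needs next (in the proof of the Lucas-type lemma that follows), so you are effectively front-loading that step; the cost is that your proof of the present lemma is longer and still relies on the valuation idea in the base case $m=1$.
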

\begin{proof}
The following proof was pointed out to us by our colleague Ronald
Rietman. \\
Let $1\leq i\leq p^m-1$. We have that
\[ {p^m\choose i} = \frac{p^m{p^m-1\choose i-1}}{i}. \]
In the above representation of ${p^m\choose i}$, the nominator
contains at least $m$ factors $p$, while the denominator contains
at most $m-1$ factors $p$.
\end{proof}
\begin{lem}\label{Lucas}
Let $p$ be a prime number, and let $m$ be a positive integer.
Moreover, let $i,j,k,\ell$ be integers such that $0\leq i,k\leq
p-1$ and $0\leq j,\ell\leq p^m-1$. Then we have that
\[ {ip^m + j \choose kp^m+\ell}\equiv {i\choose k}{j\choose \ell}
\bmod p. \] \end{lem}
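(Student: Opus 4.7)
The plan is to derive the congruence from the polynomial identity $(1+x)^{p^m}\equiv 1+x^{p^m}\pmod{p}$, which is an immediate consequence of the preceding Lemma~\ref{binop}: since $\binom{p^m}{i}\equiv 0\pmod p$ for every $1\le i\le p^m-1$, the binomial expansion of $(1+x)^{p^m}$ collapses modulo $p$ to its first and last terms. This is really the only substantive input needed; the rest of the proof is an extraction of coefficients.

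First I would write
\[ (1+x)^{ip^m+j} = \bigl((1+x)^{p^m}\bigr)^i\,(1+x)^j \equiv (1+x^{p^m})^i\,(1+x)^j \pmod{p}, \]
and then expand both factors on the right:
\[ (1+x^{p^m})^i\,(1+x)^j = \sum_{a=0}^{i}\sum_{b=0}^{j} \binom{i}{a}\binom{j}{b} x^{ap^m+b}. \]
The coefficient of $x^{kp^m+\ell}$ on the left is exactly $\binom{ip^m+j}{kp^m+\ell}$, so the lemma will follow once I identify which pair $(a,b)$ on the right contributes to this monomial.

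The key observation is uniqueness of the base-$p^m$ representation in the relevant range: if $ap^m+b=kp^m+\ell$ with $0\le b\le j\le p^m-1$ and $0\le\ell\le p^m-1$, then $a=k$ and $b=\ell$. Indeed, $a<k$ would force $ap^m+b\le (k-1)p^m+(p^m-1)=kp^m-1<kp^m+\ell$, while $a>k$ would force $ap^m+b\ge (k+1)p^m > kp^m+\ell$. Hence only the term $(a,b)=(k,\ell)$ survives, and its coefficient is $\binom{i}{k}\binom{j}{\ell}$, proving the claim.

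I do not anticipate any real obstacle: the only mild subtlety is making sure the uniqueness argument uses the hypothesis $0\le j,\ell\le p^m-1$ correctly (one needs $b\le p^m-1$, which is guaranteed by $b\le j$). Everything else is a routine manipulation of formal power series over $\mathbb{Z}_p$.
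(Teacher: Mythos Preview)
Your proof is correct and follows essentially the same route as the paper: both use Lemma~\ref{binop} to reduce $(1+x)^{p^m}$ to $1+x^{p^m}$ modulo $p$, factor $(1+x)^{ip^m+j}$ accordingly, and read off the coefficient of $x^{kp^m+\ell}$. The only difference is that you spell out the uniqueness of the pair $(a,b)$ contributing to that coefficient, which the paper leaves implicit.
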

\begin{proof}
This is a direct consequence of Lucas' theorem (see for example
\cite[Thm.\ 13.3.3]{Blahut}). We give a short direct proof.
Clearly, ${ip^m+j \choose kp^m+\ell}$ is the coefficient of
$z^{kp^m+\ell}$ in $(1+z)^{ip^m+j}$. Now we note that
\[ (1+z)^{ip^m+j} = (1+z)^{ip^m}(1+z)^j = \left[
(1+z)^{p^m}\right]^i (1+z)^{j}. \] It follows from
Lemma~\ref{binop} that $(1+z)^{p^m}\equiv 1+z^{p^m} \bmod p$, and
so
\[ (1+z)^{ip^m+j} \equiv (1+z^{p^m})^i (1+z)^j \bmod p . \]
Hence, modulo $p$, the coefficient of $z^{kp^m+\ell}$ in
$(1+z)^{ip^m+j}$ equals ${i\choose k}{j\choose \ell}$.
\end{proof}
\begin{cor}\label{recur}
Let $p$ be a prime, and let $m$ be a positive integer. Let
$a,b,c,d$ be integers such that $0\leq a,c\leq p-1$ and $1\leq
b,d\leq p^m$. Then we have
\[ Q_{p^{m+1},p^{m+1}}(ap^m+b,cp^m+d) \equiv {a\choose c}
Q_{p^m,p^m}(b,d) \bmod p . \]
\end{cor}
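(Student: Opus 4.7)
The plan is to reduce the corollary to a direct application of Lemma~\ref{Lucas}, after unpacking the definition of $Q$ on both sides. The key observation is that when $k=r=p^M$, the minimal exponent appearing in Definition~\ref{expldef} is $M$ itself, so the shift $p^M-k$ vanishes and the formula simplifies to $Q_{p^M,p^M}(i,j)=\binom{i-1}{j-1}$.

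First I would apply this simplification with $M=m+1$ on the left-hand side and $M=m$ on the right-hand side, so that the claimed congruence becomes
\[ \binom{ap^m+b-1}{cp^m+d-1} \equiv \binom{a}{c}\binom{b-1}{d-1} \pmod{p}. \]
Next I would rewrite the top and bottom of the left-hand binomial as $ap^m+(b-1)$ and $cp^m+(d-1)$ respectively, so that they are in the form $ip^m+j$, $kp^m+\ell$ required by Lemma~\ref{Lucas}.

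The only thing to check is that the hypotheses of Lemma~\ref{Lucas} are met: by assumption $0\leq a,c\leq p-1$, and from $1\leq b,d\leq p^m$ we get $0\leq b-1,d-1\leq p^m-1$, exactly the ranges demanded. Applying the lemma then yields the required congruence with no further calculation.

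The main (very minor) obstacle is purely notational bookkeeping: one must be careful that the ``$m$'' appearing in the statement is the same $m$ used to parametrize $Q_{p^m,p^m}$, and that the implicit ``$m$'' in the definition of $Q_{p^{m+1},p^{m+1}}$ is $m+1$, not $m$. Once the definitions are unpacked correctly, the corollary is an immediate instance of Lemma~\ref{Lucas}.
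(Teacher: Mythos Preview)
Your proposal is correct and follows essentially the same route as the paper: unpack Definition~\ref{expldef} at $k=r=p^{m+1}$ and $k=r=p^m$ to get $Q_{p^{m+1},p^{m+1}}(ap^m+b,cp^m+d)=\binom{ap^m+b-1}{cp^m+d-1}$ and $Q_{p^m,p^m}(b,d)=\binom{b-1}{d-1}$, then apply Lemma~\ref{Lucas} with $(i,j,k,\ell)=(a,b-1,c,d-1)$. Your explicit verification of the range hypotheses of Lemma~\ref{Lucas} is a nice touch that the paper leaves implicit.
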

\begin{proof}
According to the definition of $Q_{p^{m+1},p^{m+1}}$, we have that
\[ Q_{p^{m+1},p^{m+1}}(ap^m+b,cp^m+d) =
   {ap^m+b-1 \choose cp^m+d-1}, \mbox{ and } Q_{p^m,p^m}(b,d) ={b-1 \choose d-1}. \]
The corollary is now obtained by application of Lemma~\ref{Lucas}.
\end{proof}
In words, Theorem~\ref{recur} states that $Q_{p^{m+1},p^{m+1}}$
can be considered as a $p\times p$ block matrix, for which each
block is a multiple of $Q_{p^m,p^m}$. For example, for $p=3$, we
obtain
\[ Q_{3^{m+1},3^{m+1}}=
    \left( \matrix{ {0\choose 0} & {0\choose 1} & {0\choose 2} \cr
                    {1\choose 0} & {1\choose 1} & {1\choose 2} \cr
                     {2\choose 0}& {2\choose 1} & {2\choose
                     2}} \right)
                     \times Q_{3^m,3^m} =
     \left(      \matrix{ Q_{3^m,3^m} & 0  & 0 \cr
                    Q_{3^m,3^m} & Q_{3^m,3^m} & 0 \cr
                    Q_{3^m,3^m} & 2Q_{3^m,3^m} & Q_{3^m,3^m}}
     \right) . \]
     For $p=2$, we obtain the relation in (\ref{binrecur}).

Taking $a=p-1$ and $c=0$ in Theorem~\ref{recur}, we see that over
$\mathbb{Z}_p$, the $p^m\times p^m$ block in the lower left hand
corner of $Q_{p^{m+1},p^{m+1}}$ equals $Q_{p^m,p^m}$.
Definition~\ref{expldef}  implies $Q_{k,r}$ is the $k\times r$
matrix residing in the lower left hand corner of $Q_{p^m,p^m}$,
where $m$ is the smallest integer that such that $p^m\geq k$ and
$p^m\geq r$. The above observations imply that whenever
$k^{\prime}\geq k$ and $r^{\prime}\geq r$, then over
$\mathbb{Z}_p$, the matrix $Q_{k,r}$ is the $k\times r$ submatrix
in the lower left hand corner of $Q_{k^{\prime},r^{\prime}}$. In
particular, $Q_{k,r+1}$ can be obtained by adding a column to
$Q_{k,r}$.

We now state and prove results on the invertibility in
$\mathbb{Z}_p$ of certain submatrices of $Q_{k,r}$, that will be
used to prove our main result in Theorem~\ref{exppf}.

\begin{lem}\label{intinv}
Let $n\geq 0$ and $b\geq 1$. The $b\times b$ matrix $V_b$ with
$V_b(i,j)={n+i-1\choose j-1}$ for $1\leq i,j\leq b$ has an integer
inverse.
\end{lem}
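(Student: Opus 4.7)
The plan is to exhibit $V_b$ as the product of two unimodular triangular integer matrices. The key ingredient is the Vandermonde convolution
\[ {n+i-1\choose j-1}=\sum_{k=1}^{j}{i-1\choose k-1}{n\choose j-k}, \]
which suggests the factorization $V_b=W_bU$, where $W_b$ and $U$ are the $b\times b$ integer matrices defined by $W_b(i,k)={i-1\choose k-1}$ and $U(k,j)={n\choose j-k}$, with the convention that ${n\choose s}=0$ for $s<0$.

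Once the factorization is in hand, I would just read off the triangular structure. Since ${i-1\choose k-1}=0$ for $k>i$, the matrix $W_b$ is lower triangular, and its diagonal entries ${i-1\choose i-1}$ are all equal to $1$. Since ${n\choose j-k}=0$ for $k>j$, the matrix $U$ is upper triangular, and its diagonal entries ${n\choose 0}$ are all equal to $1$. Both factors are therefore integer matrices of determinant $1$, and each admits an integer inverse by back-substitution; hence $V_b^{-1}=U^{-1}W_b^{-1}$ is an integer matrix, as required.

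The only part that takes a moment of attention is keeping track of the indices in the Vandermonde identity, but this is standard and I do not see any genuine obstacle. An alternative, entirely elementary route would be to perform the row operations $R_i\leftarrow R_i-R_{i-1}$ for $i=b,b-1,\ldots,2$: by Pascal's rule the new $(i,j)$ entry for $i\ge 2$ becomes ${n+i-2\choose j-2}$, so the first column collapses to $(1,0,\ldots,0)^T$ while the trailing $(b-1)\times(b-1)$ block is a copy of $V_{b-1}$ (with the same $n$); induction on $b$ then gives $\det V_b=1$, and Cramer's rule concludes the argument since $V_b$ has integer entries.
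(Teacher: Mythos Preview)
Your proposal is correct. Your \emph{main} route---the LU factorization $V_b=W_bU$ via the Vandermonde convolution---is genuinely different from the paper's argument and arguably cleaner: it produces the integer inverse in one shot, with no induction, and it even identifies the inverse explicitly as $U^{-1}W_b^{-1}$, a product of two unit-triangular integer matrices. The paper, by contrast, argues by induction on $b$: it left-multiplies $V_b$ by the unit lower-bidiagonal matrix $S$ (encoding the row operations $R_i\mapsto R_i-R_{i-1}$), observes via Pascal's rule that $SV_b=\begin{pmatrix}1&A\\0&V_{b-1}\end{pmatrix}$, and concludes from the integer invertibility of $S$ and of $V_{b-1}$. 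So the paper's proof is precisely the ``alternative, entirely elementary route'' you sketch at the end (you phrase the conclusion via $\det V_b=1$ and Cramer's rule rather than via $S^{-1}$, but the reduction step is identical). What your factorization buys is a non-inductive, global picture; what the paper's approach buys is that it uses only Pascal's rule and avoids invoking the Vandermonde identity.
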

\begin{proof} By induction on $b$.
For $b=1$, this is obvious. \\
Next, let $b\geq 2$. Let $S$ be the $b\times b$ matrix with
\[ S(i,j) = \left\{ \begin{array} {ll}
     1 & \mbox{ if } i=j, \\
     -1 & \mbox{ if } i\geq 2 \mbox{ and } i=j+1, \\
     0  & \mbox{ otherwise.} \end{array} \right. \]
The matrix $S$ has an integer inverse: it is easy to check that
$S^{-1}(i,j)=1$ if $i\geq j$, and 0 otherwise. We have that
\[ (SV_b)(1,j)=V_b(1,j)={n \choose j-1} , \mbox{ and } \]
\[(SV_b)(i,j) = V_b(i,j)-V_b(i-1,j)={n+i-1\choose j-1} - {n+i-2\choose j-1} =
   {n+i-2 \choose j-2} \mbox{ for } 2\leq j \leq b. \]
In other words,  $SV_b$ is of the form
\[ SV_b = \left( \matrix{ 1 & A \cr 0 & V_{b-1}}\right) . \]
By induction hypothesis, $V_{b-1}$ has an integer inverse, and so
$V_bS$ has an integer inverse (namely the matrix $\left( \matrix{
1 & -A V_{b-1}^{-1} \cr 0  & V_{b-1}^{-1}}\right)$). As $S$ has an
integer inverse, we conclude that $V_b$ has an integer inverse.
\end{proof}
\begin{lem}\label{modpinv2}
Let $p$ be a prime number, and let $a\geq 0$ and $b\geq 1$ be
integers such that $a+b\leq p^m$.  The $b\times b$ matrix $W_b$
with $W_b(i,j) = {p^m-1+i-b \choose a+j-1}$ for 1$\leq i,j\leq b$
is invertible over $\mathbb{Z}_p$.
\end{lem}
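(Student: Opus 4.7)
The plan is to derive an explicit closed form for $\det W_b$ as a rational number and then verify that its $p$-adic valuation is zero.

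First I would invoke the binomial identity
\[ {x \choose a+j-1} = \frac{ {x \choose a}\,{x-a \choose j-1} }{ {a+j-1 \choose a} } \]
to factor $W_b$ over $\mathbb{Q}$ as $W_b = D_1 V D_2^{-1}$, where $D_1 = \mathrm{diag}\bigl( {x_i \choose a} \bigr)$ with $x_i = p^m - b + i - 1$, $D_2 = \mathrm{diag}\bigl( {a+j-1 \choose a} \bigr)$, and $V(i,j) = {x_i - a \choose j-1} = {(p^m - a - b) + i - 1 \choose j-1}$. The hypothesis $a + b \leq p^m$ forces $n := p^m - a - b \geq 0$, which is precisely the setup of Lemma~\ref{intinv}, so $V$ has an integer inverse and $\det V = \pm 1$. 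Taking determinants yields
\[ \det W_b = \pm\,\frac{\prod_{i=1}^b {x_i \choose a}}{\prod_{j=1}^b {a+j-1 \choose a}}, \]
which is an integer; invertibility of $W_b$ over $\mathbb{Z}_p$ is then equivalent to showing $v_p(\det W_b) = 0$, where $v_p$ is the $p$-adic valuation.

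Next I would apply Legendre's formula $v_p({n \choose k}) = (s_p(k) + s_p(n-k) - s_p(n))/(p-1)$, where $s_p$ denotes the base-$p$ digit sum, to reduce the required equality $\sum_i v_p {x_i \choose a} = \sum_j v_p {a+j-1 \choose a}$ to
\[ \sum_{i=1}^b \bigl[ s_p(x_i - a) - s_p(x_i) \bigr] = \sum_{j=1}^b \bigl[ s_p(j-1) - s_p(a+j-1) \bigr]. \]
The key tool here is the elementary identity
\[ s_p(p^m - c) = m(p-1) - s_p(c-1) \qquad \text{for } 1 \leq c \leq p^m, \]
which holds because $p^m - 1$ has all base-$p$ digits equal to $p-1$, so $p^m - c = (p^m - 1) - (c-1)$ can be computed digit-wise without borrowing.

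Applying this identity to $x_i = p^m - (b - i + 1)$ and $x_i - a = p^m - (a + b - i + 1)$ (whose arguments both lie in $[1, p^m]$ by the hypothesis) and substituting $k = b - i$ collapses the left-hand side above to $\sum_{k=0}^{b-1}[s_p(k) - s_p(a+k)]$, which matches the right-hand side after setting $k = j - 1$. Therefore $v_p(\det W_b) = 0$, and $W_b$ is invertible over $\mathbb{Z}_p$. The only step calling for real insight is spotting the factorization in the first paragraph (which lets Lemma~\ref{intinv} do the work); after that, everything is driven by the single identity $s_p(p^m - c) = m(p-1) - s_p(c-1)$ and amounts to a short digit-sum bookkeeping.
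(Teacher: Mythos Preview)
Your argument is correct and takes a genuinely different route from the paper. The paper proceeds by induction on $b$: it right-multiplies $W_b$ by an elementary bidiagonal matrix $T$ so that Pascal's rule collapses the bottom row to $((-1)^a,0,\ldots,0)$ and the remaining $(b-1)\times(b-1)$ block becomes $W_{b-1}$, reducing invertibility of $W_b$ over $\mathbb{Z}_p$ to that of $W_{b-1}$ (with the base case $\binom{p^m-1}{a}\equiv(-1)^a\bmod p$ handled via Lemma~\ref{binop}). Your approach instead exploits the trinomial-revision identity $\binom{x}{a}\binom{x-a}{j-1}=\binom{x}{a+j-1}\binom{a+j-1}{a}$ to factor $W_b$ over~$\mathbb{Q}$ as a diagonal times the matrix of Lemma~\ref{intinv} times an inverse diagonal, and then finishes with a $p$-adic digit-sum computation via Legendre's formula and the reflection $s_p(p^m-c)=m(p-1)-s_p(c-1)$. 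The paper's route is more self-contained (only Pascal's rule and Lemma~\ref{binop}) and exhibits a recursive block structure for $W_b$ modulo~$p$; yours is more direct, produces an explicit closed form $\det W_b=\pm\prod_i\binom{x_i}{a}\big/\prod_j\binom{a+j-1}{a}$, and makes the link to Lemma~\ref{intinv} transparent, at the cost of importing Legendre/Kummer as an outside tool.
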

\begin{proof}
Similarly to the proof of Lemma~\ref{intinv}, we apply
induction on $b$. \\
For $b=1$, the we have the 1x1 matrix with entry  ${p^m-1 \choose
a}$. By induction on $i$, using that ${p^m-1 \choose i}
={p^m\choose i}-{p^m-1 \choose i-1}$ and employing
Lemma~\ref{binop}, we readily find that ${p^m-1 \choose i} \equiv
(-1)^i \bmod p \mbox{ for } 0\leq i\leq p^m-1$. As a consequence,
the lemma is true for
$b=1$. \\
 Now let $b\geq 2$. We define the $b\times b$ matrix $T$ by
\[ T(i,j) = \left\{ \begin{array}{ll} 1 & \mbox{ if } i=j \cr
                            1 & \mbox{ if } j\geq 2 \mbox{ and }
                            i=j-1 \cr
                            0 & \mbox{ otherwise}
                            \end{array} \right. \]
It is easy to check $T$ has an integer inverse, and that
$T^{-1}(i,j)=(-1)^{i-j}$ if $i\leq j$ and 0 otherwise. In order to
show that $W_b$ is invertible in $\mathbb{Z}_p$, it is thus
sufficient to show that $W_bT$ is invertible in $\mathbb{Z}_p$. By
direct computation, we have that $(W_bT)(i,1)=W_b(i,1)$, and
\[ (W_bT)(i,j) = W_b(i,j)+W_b(i,j-1) =
  {p^m-1+i-b \choose a+j-1} + {p^m-1+i-b \choose a+j-2} =
   {p^m+i-b \choose a+j-1} . \]
In particular, $(W_bT)(b,1)={p^m-1 \choose a}\equiv (-1)^a \bmod
p$, and for $2\leq j\leq b$, we have that $(W_bT)(b,j)={p^m
\choose a+j-1}\equiv 0\bmod p$. We thus have that
\[ W_bT \equiv \left(\matrix{A & W_{b-1} \cr (-1)^a & 0} \right) \bmod p . \]
As $W_{b-1}$ is invertible over $\mathbb{Z}_p$, the matrix $W_bT$
(and hence the matrix $W_b$) is invertible over $\mathbb{Z}_p$.
                            \end{proof}

{\bf Remark} The matrix in Lemma~\ref{modpinv2} need not have an
integer inverse. For example, take $p=2, m=2, a=1$ and $b=2$. The
matrix $W_2$ equals
\[ \left( \matrix{{2\choose 1} & {3\choose 1} \cr {2\choose 2} &
{3\choose 2}}\right) = \left( \matrix{ 2 & 3 \cr 1 & 3}\right),
\] and so $W_2^{-1}= \left( \matrix{1 & -1  \cr -\frac{1}{3} &
\frac{2}{3}}\right)$. Note that modulo 2, $W_2$ equals $\left(
\matrix{0 & 1 \cr 1 & 1}\right)$, confirming that $W_2$ does have
an inverse in the integers modulo $p=2$.

\vspace{0.5cm}
We are now in a position to prove the main result of this section.
\begin{teor}\label{exppf}
Let $k$ and $r$ be positive integers. For  $j=k,k+1,\ldots k+r$,
the matrix consisting of the $j$ leftmost columns of the matrix
$(I_k \;Q_{k,r})$ is good over $\mathbb{Z}_p$.
\end{teor}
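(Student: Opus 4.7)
The plan is to reduce goodness of $(I_k\;Q_{k,s})$, where $s=j-k$ and $0\le s\le r$, to the invertibility over $\mathbb{Z}_p$ of certain square submatrices of $Q_{k,s}$, then dispatch these via Lemmas~\ref{intinv} and~\ref{modpinv2}. By the observations preceding Lemma~\ref{intinv}, $Q_{k,s}$ is (over $\mathbb{Z}_p$) the leftmost $s$ columns of $Q_{k,r}$, so it indeed suffices to show $(I_k\;Q_{k,s})$ is good. Label the columns $c_0,\ldots,c_{k-1},q_0,\ldots,q_{s-1}$, with $c_\ell$ the $\ell$th standard unit vector and $q_\ell$ the $\ell$th column of $Q_{k,s}$. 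A window of $k$ cyclically consecutive columns has the form $\{c_\ell:\ell\in S\}\cup\{q_\ell:\ell\in T\}$ with $|S|+|T|=k$; by Laplace expansion along the unit-vector columns, these $k$ columns are independent over $\mathbb{Z}_p$ if and only if the $|T|\times|T|$ submatrix of $Q_{k,s}$ with row set $\{0,\ldots,k-1\}\setminus S$ and column set $T$ is invertible.

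Next I enumerate windows by starting position $i\in\{0,\ldots,k+s-1\}$. Besides the trivial case $i=0$ (window $=I_k$), a direct check yields four exhaustive, non-overlapping patterns (in $1$-indexed coordinates for $Q_{k,s}$): (b) for $1\le i\le\min(s,k)$, the top-left $i\times i$ block; (c) for $k<i\le s$ (possible only if $s>k$), all $k$ rows and columns $i-k+1,\ldots,i$; (d) for $s<i<k$ (possible only if $s\le k-2$), rows $i-s+1,\ldots,i$ with all $s$ columns; and (e) for $\max(k,s+1)\le i\le k+s-1$, setting $t:=k+s-i$, the bottom-right $t\times t$ block with rows $k-t+1,\ldots,k$ and columns $s-t+1,\ldots,s$.

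Since $Q_{k,s}(r,c)={p^m-k+r-1\choose c-1}$ for any $m$ with $p^m\ge\max(k,s)$, the submatrices in (b) and (d) take the form $V_b(r,c)={n+r-1\choose c-1}$ of Lemma~\ref{intinv}: in (b), $b=i$ and $n=p^m-k\ge 0$; in (d), $b=s$ and $n=p^m-k+i-s\ge 1$. Hence they admit an integer inverse and are invertible over $\mathbb{Z}_p$. The submatrices in (c) and (e) take the form $W_b(r,c)={p^m-1+r-b\choose a+c-1}$ of Lemma~\ref{modpinv2}: in (c), $b=k$ and $a=i-k\ge 1$, with $a+b=i\le s\le p^m$; in (e), $b=t$ and $a=s-t\ge 0$, with $a+b=s\le p^m$. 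In both cases Lemma~\ref{modpinv2} yields invertibility over $\mathbb{Z}_p$, completing the proof.

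The main obstacle is bookkeeping rather than substance: making sure the case analysis exhausts every cyclic window exactly once, and matching the shift parameters ($n$ for Lemma~\ref{intinv}; $a,b$ for Lemma~\ref{modpinv2}) so that their hypotheses are satisfied. All the real work is already done by the two invertibility lemmas.
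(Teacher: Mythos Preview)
Your proof is correct and follows essentially the same approach as the paper: reduce each cyclic window to the invertibility of a square submatrix of $Q$, then apply Lemma~\ref{intinv} when the submatrix sits in a band of consecutive rows starting from the left of $Q$, and Lemma~\ref{modpinv2} when it sits in the bottom rows. The only cosmetic difference is that the paper compresses your four cases into two by writing $b=\min(i-1,j-k)$ (your (b) and (d)) and $b=\min(k,j-i+1)$ (your (c) and (e)); your finer split makes the parameter matching with the two lemmas a bit more explicit, but the substance is identical.
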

\begin{proof}
We denote the matrix $(I_k \; Q_{k,r})$ by $G$, and the $i$-th
column of $G$ by {\bf g}$_i$. Let $k\leq j\leq k+r$. To show that
the matrix consisting of the columns 1,2,\ldots, $j$ of $G$ is
good, we show that for $1\leq i\leq j$, the vectors ${\bf
g}_i,{\bf g}_{i+1},\ldots ,{\bf g}_{i+k-1}$ are independent over
$\mathbb{Z}_p$, where the indices are counted modulo $j$. This is
obvious if $j=k$ and if $i=1$, so we assume that $j\geq k+1$ and
$i\geq 2$. We distinguish between two cases.
\\
(1) $2\leq i\leq k$. \\
The vectors to consider are ${\bf e}_i,\ldots ,{\bf e}_k, {\bf
g}_{k+1},\ldots ,{\bf g}_{i+k-1}$ (if $i+k-1\leq j$), or ${\bf
e}_i,\ldots ,{\bf e}_{k},\\ {\bf g}_{k+1},\ldots ,{\bf g}_{j},{\bf
e}_1,\ldots ,{\bf e}_{k-j+i-1}$ (if $i+k-1\geq j+1$). We define
$b$:=min($i-1,j-k$). The vectors under consideration are
independent if the $b\times b$ matrix consisting of the $b$
leftmost columns of $Q_{k,r}$, restricted to rows
$i-b,i-b+1,\ldots ,i=1$, is invertible in $\mathbb{Z}_p$.
This follows from Lemma~\ref{intinv}. \\
(
2) $i\geq k+1$. \\
The vectors to consider are ${\bf g}_{i},\ldots ,{\bf g}_{i+k-1}$
(if $i+k-1\leq j$), or ${\bf g}_{i},\ldots ,{\bf g}_{j},{\bf
e}_1,\ldots ,{\bf e}_{k-j+i-1}$ (if $i+k-1\geq j+1$). We define
$b$:=min($k,j-i+1$). The vectors under consideration are
independent if  the $b\times b$ matrix consisting of the $b$
bottom entries of the columns $i-k+1,i-k+2,\ldots,i-k+b$ of
$Q_{k,r}$ is invertible in $\mathbb{Z}_p$. This follows from
Lemma~\ref{modpinv2}.
\end{proof}

      \end{document}